\newcommand{\tx}{\mathbf{x}}
\newcommand{\tz}{\mathbf{z}}
\newcommand{\tzstar}{\textbf{z}^{\star}}
\newcommand{\bx}{\boldsymbol{x}}
\newcommand{\bz}{\boldsymbol{z}}
\newcommand{\bth}{\boldsymbol{\theta}}
\newcommand{\bTh}{\boldsymbol{\Theta}}
\newcommand{\bomega}{\boldsymbol{\omega}}
\newcommand{\bm}{\boldsymbol{m}}
\newcommand{\bmstar}{\boldsymbol{m}^{\star}}
\newcommand{\M}{\mathcal{M}}
\newcommand{\mhat}{\hat{\boldsymbol{m}}}
\newcommand{\bmu}{\boldsymbol{\mu}}
\newcommand{\bsigma}{\boldsymbol{\sigma}}
\newcommand{\bpi}{\boldsymbol{\pi}}
\newcommand{\bbP}{\mathbb{P}}
\newcommand{\ICL}{\text{ICL}}
\newcommand{\MICL}{\text{MICL}}
\newcommand{\argmax}{\text{argmax}}
\newcommand{\pzero}{(0)}
\newcommand{\pone}{(1)}
\begin{document}

\title{Variable Selection for Model-Based Clustering using the Integrated Complete-Data Likelihood
}


\author{Matthieu Marbac         \and
        Mohammed Sedki 
}


\institute{M. Marbac \at
              INSERM U1181 \\
              \email{matthieu.marbac@inserm.fr}           
           \and
           M. Sedki\at
              INSERM U1181 and University of Paris Sud\\
              \email{mohammed.sedki@inserm.fr}
}

\date{Received: date / Accepted: date}

\maketitle

\begin{abstract}
Variable selection in cluster analysis is important yet challenging. It
  can be achieved by regularization methods, which realize a trade-off between
  the clustering accuracy and the number of selected variables by using a
  lasso-type penalty. However, the calibration of the penalty term can suffer
  from criticisms. Model selection methods are an efficient alternative, yet they require a difficult optimization of an information criterion which involves combinatorial problems. First, most of these optimization algorithms are based on a
  suboptimal procedure (\emph{e.g.} stepwise method). Second, the algorithms are often
  greedy because they need multiple calls of EM algorithms. Here we propose to use a
  new information criterion based on the integrated complete-data likelihood.
It does not require the maximum likelihood estimate and its maximization appears to be simple and computationally efficient. 
  The original contribution of our approach is to perform the model selection
  without requiring any parameter estimation. Then, parameter inference is
 needed only for the unique selected model. This approach is used for the variable
  selection of a Gaussian mixture model with conditional independence
  assumption. The numerical experiments on simulated and benchmark datasets
  show that the proposed method often outperforms two classical approaches for
  variable selection. The proposed approach is implemented in the R package VarSelLCM available on CRAN.
\keywords{Gaussian mixture model \and Information criterion \and Integrated complete-data likelihood \and Model-based clustering \and Variable selection}
\end{abstract}

\section{Introduction}
Clustering allows us to summarize large datasets by grouping individuals into
few characteristic classes.  It aims to discover an \emph{a priori} unknown partition among the
individuals. In many cases, this partition may be best explained by only a subset of the observed variables. So, by performing the variable selection in the cluster
analysis, both \emph{model fitting} and \emph{result interpretation} are facilitated.
Indeed, for a fixed sample size, a variable selection method can provide a
more accurate identification of the classes. Moreover, such methods bring out
the variables characterizing the classes .

\emph{Regularization methods} can be used to achieve variable selection in
clustering. One can cite the approaches  of \citet{Fri04} or \citet{Pan07}. Recently, these methods have been
outperformed by the \emph{sparse K-means} proposed by \citet{Wit10}. It uses
a lasso-type penalty to select the set of variables relevant to clustering. Since it
requires small computational times, it can manage
high-dimensional datasets. 
Moreover, the
selection of the number of classes is a difficult issue since probabilistic tools
are not available. Finally, its results are sensitive to structure of the penalty term.

\emph{Model selection approaches} can be used to carry out the variable selection in
a probabilistic framework. \citet{Tad05} consider two types of variables: 
the set of the \emph{relevant variables} and the set of the \emph{irrelevant variables} 
which are independent of the relevant ones. This method has been extended by
\citet{Raf06} by using a greedy search algorithm to find the set of 
relevant variables. Obviously, this algorithm finds only a local optimum in
the space of models. It is feasible for quite large
datasets because of its moderate computing time. Still, this method remains  time
consuming since the model comparisons are performed by using the BIC criterion
\citep{Sch78}. Therefore, the maximum likelihood estimate must be computed for each competing model. These estimates are mainly instrumental since the practitioner interprets only the estimate
related to the best model.

In this paper, we propose a new information criterion, named \emph{MICL criterion} (Maximum
Integrated Complete-data Likelihood), for carrying out the variable selection in model-based
clustering. This criterion is quite similar to the ICL criterion~\citep{Bie10}, and
it inherits its main properties. However, these two criteria evaluate the integrated
complete-data likelihood at two different partitions. The MICL criterion uses
the partition maximizing this function, while the ICL criterion uses the
partition provided by a MAP rule associated to the maximum likelihood
estimate.
  
In this article, we focus on variable selection for a \emph{Gaussian
mixture model with conditional independence} assumption, but the method can be
extended to more general mixture models. Note that this model is useful
especially when the number of variables is large \citep{Han01}. Moreover, the conditional independence is often an hidden assumption made by distance-based methods like the \emph{K-means} algorithm \citep{Gov09}. The MICL
criterion takes advantage of the closed form of the integrated complete-data
likelihood when the priors are conjugated. The model selection is carried out
by a simple and fast procedure which alternates two maximizations for providing the model maximizing the MICL criterion. The convergence properties of this algorithm are similar to the convergence properties of the EM algorithm. In particular, it converges to a local optimum of the function to maximize. So, multiple random initializations are required to ensure its convergence to the global maximum.

The proposed method and the methods of \cite{Wit10}, and of \cite{Raf06} are compared on simulated and on challenging real datasets. We show that the proposed method outperforms both other methods in terms of  model selection and partitioning accuracy. It often provides a model with a better value of
the BIC criterion than the algorithm of \cite{Raf06}, although it does
not directly optimize this criterion. Finally, we show that the proposed method can
manage datasets with a large number of variables and a moderately large number of
individuals. Note that it is the most common situation which requires variable selection in cluster analysis.

The paper is organized as follows. Section~\ref{sec::model} briefly reviews
the framework of variable selection for the Gaussian mixture model. A presentation of the integrated complete-data likelihood is done in Section~\ref{sec::complete} before introducing the MICL criterion. Section~\ref{sec::inference} is devoted to the inference based on the MICL criterion. Section~\ref{sec::simulations} illustrates the robustness properties of the MICL
criterion and compares the three methods of variable selection on simulated
data. Section~\ref{sec::realdata} compares the three methods of variable
selection on challenging datasets. The advantages and limitations of the
method are discussed in Section~\ref{sec::conclusion}.

\section{Variable selection for Gaussian mixture model} \label{sec::model}
\subsection{Mixture model of Gaussian distributions}
Data to analyze are $n$ observations $\tx=(\bx_1,\ldots,\bx_n)$, where object
$\bx_i=(x_{i1},\ldots,x_{id})$ 
is described by $d$ continuous variables defined on $\mathbb R^d$. Observations are assumed to
arise independently from a Gaussian mixture model with $g$ components, assuming conditional independence between variables. Therefore, the
model density is written as
\begin{equation}
f(\bx_i | \bm, \bth)=\sum_{k=1}^g \pi_k \prod_{j=1}^d \phi(x_{ij}|\mu_{kj},\sigma_{kj}^2),
\end{equation}
where $\bm$ specifies the model, $\bth=(\bmu, \bsigma, \bpi)$
is the whole parameter vector, 
$\bpi=(\pi_1,\ldots,\pi_g)$ is the vector of mixing proportion defined on the simplex of size $g$,
$\bmu=(\mu_{kj};k=1,\ldots,g;j=1,\ldots,d)$, $\bsigma=(\sigma_{kj};k=1,\ldots,g;j=1,\ldots,d)$, and where $\phi(.|\mu_{kj},\sigma_{kj}^2)$ is the density of a univariate Gaussian distribution with mean $\mu_{kj}$ and variance $\sigma_{kj}^2$.

A variable is said to be \emph{irrelevant} to the clustering if its one-dimensional marginal distributions are equal between components. Thus, by introducing $\omega_j$ such that $\omega_j=0$ if variable $j$ is irrelevant and $\omega_j=1$ if the variable is \emph{relevant} for the clustering, the following equalities hold:
\begin{equation}
\forall j\in\{j': \omega_{j'}=0\},\;  \mu_{1j}=\ldots=\mu_{gj} \text{ and } \sigma_{1j}=\ldots=\sigma_{gj}.
\end{equation}
Thus, a model $\bm=(g,\bomega)$ is defined by a number of components $g$ and the binary vector $\bomega=(\omega_j;j=1,\ldots,d)$
which encodes whether each of $d$ possible variables are relevant to the clustering. 

\subsection{Model selection based on the integrated likelihood}
 Model selection generally aims to find  the model $\mhat$
  which obtains the highest posterior probability among a collection
  of competing models $\M$.
 So,
\begin{equation}
\mhat = \argmax_{\bm\in\M} p(\bm|\tx).
\end{equation}
This model selection approach is consistent since $\mhat$ converges in probability to
the true model $\bm^{(0)}$ as long as the true model belongs to the model space
(\emph{i.e.} if $\bm^{(0)}\in\M$).

By assuming uniformity for the prior distribution of $\bm$, 
$\mhat$ maximizes the integrated likelihood defined by
\begin{align}
\mhat & = \argmax_{\bm\in\M} p(\tx|\bm) \text{ with } p(\tx | \bm) \nonumber\\
& =\int_{\bTh_{\bm}} p(\tx | \bm, \bth) p(\bth | \bm) d\bth,
\end{align}
where $\bTh_{\bm}$ is the parameter space of model $\bm$,
$p(\tx | \bm, \bth)=\prod_{i=1}^n f(\bx_i | \bm, \bth)$ is the
likelihood function and $ p(\bth | \bm)$ is the prior
distribution of the parameters. We assume independence between the
prior, so
\begin{equation}
p(\bth | \bm)  = p(\bpi | \bm) \prod_{j=1}^d p(\bsigma_{ \bullet j}^2,\bmu_{\bullet j}|\bm),
\end{equation}
 where $\bsigma_{\bullet j}^2=(\sigma_{kj}^2;k=1,\ldots,g)$ and $\bmu_{\bullet j}^2=(\mu_{kj}^2;k=1,\ldots,g)$, and
\begin{align}
p(\bsigma_{\bullet j}^2,\bmu_{\bullet j}|\bm) = & \Big( \prod_{k=1}^g  p(\sigma_{kj}^2 | \bm) p (\mu_{kj} | \bm, \sigma_{kj}^2)\Big)^{\omega_j}
\nonumber  \Big(  p(\sigma_{1j}^2 | \bm) p (\mu_{1j} | \bm, \sigma_{1j}^2)\Big)^{1-\omega_j}.
\end{align}
We use conjugate prior distributions, thus $\bpi|\bm$ follows a
Dirichlet distribution $\mathcal{D}_g(\frac{1}{2},\ldots,\frac{1}{2})$
which is the Jeffreys non informative prior \citep{Rob07}. Moreover,
$\sigma_{kj}^2 | \bm$ follows an Inverse-Gamma distribution
$\mathcal{IG}(\alpha_j/2,\beta_j^2/2)$ and
$\mu_{kj} | \bm, \sigma_{kj}^2$ follows a Gaussian distribution
$\mathcal{N}(\lambda_j, \sigma_{kj}^2/\delta_j)$, where
$(\alpha_j,\beta_j,\lambda_j,\delta_j)$ are hyper-parameters.

Unfortunately, the integrated likelihood is intractable. However,
many methods permit to approximate its value \citep{Fri12}. The most popular approach consists in using the BIC criterion \citep{Sch78}, which
approximates the logarithm of the integrated likelihood by Laplace approximation and requires maximum likelihood estimation. The BIC criterion is written as
\begin{equation}
\text{BIC}(\bm)=\ln p(\tx|\bm,\hat{\bth}_{\bm}) - \frac{\nu_{\bm}}{2} \ln n,
\end{equation}
where $\hat{\bth}_{\bm}$ is the maximum likelihood estimate related to model $\bm$ when $\nu_{\bm}$ is the number of parameters required by $\bm$.

For a fixed value of $g$, the variable selection in clustering necessitates 
the comparison of $2^d$ models. Therefore, an exhaustive approach which approximates the integrated likelihood for each competing model is not doable.  Instead, \citet{Raf06} carry out the model
selection by deterministic algorithms (like a \emph{forward} method) which
are suboptimal. Moreover, they  are time consuming when the number of
variables is large, because they involve many parameter estimations
for their model comparisons. 

All maximum likelihood estimates are mainly instrumental: they are only used for computing the BIC criterion, with the exception of the estimates related to the selected model $\mhat$ which are interpreted by the practitioner. Therefore, we introduce a new criterion for model selection
which does not require parameter estimates.

\section{Model selection based on the integrated complete-data likelihood} \label{sec::complete} 

\subsection{The integrated complete-data likelihood}

A partition is given by the vector
  $\tz=(\bz_1,\ldots,\bz_n)$ where $\bz_i=(z_{i1},\ldots,z_{ig})$
  indicates the class label of vector $i$, \textit{i.e.} $z_{ik}=1$ if $\bx_i$ arises from component $k$ and $z_{ik}=0$
  otherwise. In cluster analysis, $\tz$ is a missing value. Thus, the
likelihood function computed on the complete-data (observed and latent variables), called \emph{complete-data likelihood} function, is
introduced. It is defined by
\begin{equation}
p(\tx, \tz |\bm,\bth)= \prod_{i=1}^n \prod_{k=1}^g \big( \pi_k \prod_{j=1}^d \phi(x_{ij}|\mu_{kj},\sigma_{kj}^2) \big)^{z_{ik}}.
\end{equation}
The \emph{integrated complete-data likelihood} is 
\begin{equation}
p(\tx, \tz |\bm) = \int_{\bTh_{\bm}} p(\tx, \tz |\bm,\bth) p(\bth | \bm) d\bth.
\end{equation}
Since conjugate prior distributions are used, the integrated
complete-data likelihood has the following closed form
\begin{equation}
p(\tx, \tz | \bm) = 
p(\tz | g)
 \prod_{j=1}^d p(\tx_{\bullet j}|g,\omega_j,\tz),
\end{equation}
where $\tx_{\bullet j}=(x_{ij};i=1,\ldots,n)$. More specifically,
\begin{equation}
p(\tz | g) =
\frac{\Gamma(\frac{g}{2})}{\Gamma(\frac{1}{2})^g}\frac{\prod_{k=1}^g \Gamma(n_k + \frac{1}{2})}{\Gamma(n+\frac{g}{2})},
\end{equation}
where $n_k=\sum_{i=1}^nz_{ik}$, and
\begin{equation}
p(\tx_{\bullet j}|g,\omega_j,\tz)=\left\{ \begin{array}{rl}
\left( \frac{1}{\pi} \right)^{n/2} \frac{\Gamma\left(\frac{n+\alpha_j}{2}\right)}{\Gamma\left(\frac{\alpha_j}{2}\right)}
\left( \frac{\beta_j^{\alpha_j}}{s_j^{\alpha_j+n}}\right)
\sqrt{\frac{\delta_j}{n + \delta_j}}

& \text{if } \omega_j = 0\\
\prod_{k=1}^g
\left( \frac{1}{\pi} \right)^{n_k/2} \frac{\Gamma\left(\frac{n_k+\alpha_j}{2}\right)}{\Gamma\left(\frac{\alpha_j}{2}\right)}
\left( \frac{\beta_j^{\alpha_j}}{s_{jk}^{\alpha_j+n_k}}\right)
\sqrt{\frac{\delta_j}{n_k + \delta_j}}

& \text{if } \omega_j = 1,
\end{array}\right. 
\end{equation}
where $s_j^2=\beta_j^2 + \sum_{i=1}^n (x_{ij} - \bar{\text{x}}_j)^2 + \frac{(\lambda_j - \bar{\text{x}}_j)^2}{(\delta_j^{-1} + (n+\delta_j)^{-1})}$, $\bar{\text{x}}_j=\frac{1}{n}\sum_{i=1}^n x_{ij}$, 
$s_{jk}^2=\beta_j^2 + \sum_{i=1}^n z_{ik} (x_{ij} - \bar{\text{x}}_{jk})^2 + \frac{(\lambda_j - \bar{\text{x}}_{jk})^2}{(\delta_j^{-1} + (n_k+\delta_j)^{-1})}$ and $\bar{\text{x}}_{jk}=\frac{1}{n_k}\sum_{i=1}^n z_{ik}x_{ij}$. For $j$ such as $\omega_j=0$, we get $p(\tx_{\bullet j}|g,\omega_j,\tz)=p(\tx_{\bullet j}|g,\omega_j)$ since the partition does not impact the value of the integral.

\subsection{The ICL criterion}

The ICL criterion \citep{Bie10} carries out the model selection by focusing on
the goal of clustering.  It favors a model providing a partition with a
strong evidence since it makes a trade-off between the model evidence and the
partitioning evidence. The ICL criterion is defined by
\begin{equation}
\ICL(\bm) = \ln p(\tx, \hat{\tz}_{\bm} | \bm),
\end{equation}
where $\hat{\tz}_{\bm}$ is the partition given by the MAP rule evaluated at
the maximum likelihood estimate $\hat{\bth}_{\bm}$.

When the model at hand is not the model used in the sampling scheme, the ICL
criterion inherits robustness from this trade-off while the BIC criterion
tends to overestimate the number of components. This phenomenon is
illustrated in Section~\ref{sec::simul1} by our numerical experiments.

Although the ICL criterion has a closed form, it requires the maximum
likelihood estimates to define the partition $\hat{\tz}_{\bm}$. The time devoted to
parameter estimation can become computationally prohibitive. Therefore, in this
work, we introduce a new criterion avoiding this drawback.

\subsection{The MICL criterion}
We propose a new information criterion for model selection, named \emph{MICL
  criterion} (Maximum Integrated Complete-data Likelihood). This criterion
corresponds to the largest value of the integrated complete-data likelihood among all 
the possible partitions. Thus, the MICL criterion is defined by
\begin{equation}
\MICL(\bm) = \ln p(\tx, \tzstar_{\bm} | \bm) \text{ with } \tzstar_{\bm}=\argmax_{\tz} \ln p(\tx, \tz | \bm).
\end{equation}

Obviously, this criterion is quite similar to the ICL criterion and inherits its
main properties. In particular, it is less sensitive to model misspecification than the BIC criterion. Unlike the ICL and the BIC criteria, it does not require the maximum likelihood estimates but it implies the estimation of $\tzstar_{\bm}$ which appears to be easily accessible (see Section~\ref{sec::simulations} and Section~\ref{sec::realdata}). Among the
models in competition, the selected model maximizes the MICL criterion and is
denoted by $\bmstar$ with
\begin{equation}
\bmstar = \argmax_{\bm\in\M}\MICL(\bm).
\end{equation}

The selected model $\bmstar$ is consistent when the number of components is known, see the proof in 
Appendix~\ref{sec::consistence}. Nevertheless, like the ICL criterion, the
MICL criterion lacks consistency to select the number of components if the component overlap is too strong. However, numerical experiments
show its good behaviour to also select the right number of components (see
Section~\ref{sec::simul1}).

\section{Model selection and parameter estimation} \label{sec::inference} 
The number of components of the
competing models is usually bounded by a value $g_{\max}$. So, the
space of the competing models is written as
\begin{equation}
\M=\left\{\bm=(g,\bomega):\; g\in\{1,\ldots,g_{\max}\} \text{ and } \bomega\in \{0,1\}^d \right\}.
\end{equation}
We denote by $\M_g$ the restriction of $\M$ to the subset of the
models having $g$ components.  The model $\bmstar_g$ maximizes the MICL
criterion among the models belonging to $\M_g$. Therefore,
\begin{equation}
  \bmstar_g = \argmax_{\bm \in \M_g} \MICL(\bm) \text{ with } \M_g=\{(g,\bomega):\bomega \in \{0,1\}^d\}.
\end{equation}
Thus, $\bmstar_g$ defines the best variable selection according to the
MICL criterion for a fixed value of $g$. Obviously,
\begin{equation}
\bmstar=\argmax_{g=1,\ldots,g_{\max}} \MICL(\bmstar_g).
\end{equation}
The estimation of $\bmstar_g$ implies to maximize the integrated complete-data likelihood on $(\bomega, \tz)$. This maximization is facilitated by the fact that $\bomega$ does not influence the definition space of the vector of component membership. Indeed,  $\tzstar_{\bm} $ is defined on the same space $\{1,\ldots,g\}^n$ for each model $\bm$ in $\M_g $. This twofold optimization is performed by an iterative algorithm presented below. Note that this algorithm cannot be used to optimize the ICL or the BIC criterion, since $\bth$ us not defined on the same space for each model in $\M_g$. We obtain $\bmstar$ by running this algorithm with $g$ chosen from one to $g_{\max}$.

\subsection{Algorithm for MICL-based model selection} \label{sec::algo}
The following iterative algorithm is used to find $\bmstar_g$ for any $g$ in $\{1,\ldots,g_{\max}\}$. Starting from the initial point $(\tz^{[0]}, \bm^{[0]})$ with $\bm^{[0]}\in\M_g$, it alternates between two optimizations of the integrated complete-data likelihood: optimization on $\tz$, given $(\tx,\bm)$, and maximization on $\bomega$ given $(\tx,\tz)$.
The algorithm is initialized as follows: first $\bm^{[0]}$ is sampled from a uniform distribution on $\M_g$, second $\tz^{[0]}=\hat{\tz}_{\bm^{[0]}}$ is the partition provided by a MAP rule associated to model $\bm^{[0]}$ and to its maximum likelihood estimate $\hat{\bth}_{\bm^{[0]}} $. Iteration $[r]$ of the algorithm is written as\\
\textbf{Partition step:} fix $\tz^{[r]}$ such that 
$$\ln p(\tx, \tz^{[r]} | \bm^{[r]}) \geq \ln p(\tx, \tz^{[r-1]} | \bm^{[r]}) .$$
\textbf{Model step:} fix $\bm^{[r+1]}=\argmax_{\bm\in\M_g}
  \ln p(\tx,\tz^{[r]} |\bm)$ such that 
$$\bm^{[r+1]}=(g,\bomega^{[r+1]}) \text{ with } \omega_j^{[r+1]} =
\argmax_{\omega_j \in \{0,1\}} p(\tx_{\bullet j} |g,\omega_j, \tz^{[r]}).$$

The partition step is performed by an iterative method. Each iteration
consists in sampling uniformly an individual which is affiliated to the 
class maximizing the integrated complete-data likelihood while the other class
memberships are unchanged.

Like an EM algorithm, the proposed algorithm converges to a local optimum of $\ln p(\tx, \tz | \bm)$. Thus, many
different initializations should be used to ensure the convergence to
$\bmstar_g$. However, we show that this algorithm does not suffer from the problem of local optima during our applications (see  Section~\ref{sec::realdata}).


\subsection{Maximum likelihood inference for the model maximizing the MICL criterion}
When model $\bmstar=(g^{\star},\bomega^{\star})$ has been found, usually the estimate $\hat{\bth}_{\bmstar}$ maximizing the likelihood function is required:
	\begin{equation}
\hat{\bth}_{\bmstar}=\argmax_{\bth \in \bTh_{\bmstar}} p(\tx|\bmstar,\bth). \label{vrai}
	\end{equation}

    The direct optimization of the likelihood function would involve to solve equations
    that have no analytical solution. Instead, the parameter estimation is performed
    via an EM algorithm \citep{Dem77}, which is often simple and efficient in the situation of missing
    data. This iterative algorithm alternates between two steps: the
    computation of the complete-data log-likelihood conditional expectation
    (\textsc{e} step) and its maximization (\textsc{m} step).
    Its iteration $[r]$ is written as:\\
    \textbf{E step:} computation of the conditional probabilities
$$
t_{ik}^{[r]}=
\frac{\pi_k^{[r]} \prod_{j=1}^d \phi(x_{ij}|\mu_{kj}^{[r]},\sigma_{kj}^{[r]2})}{
\sum_{k'=1}^{g^{\star}} \pi_{k'}^{[r]} \prod_{j=1}^d \phi(x_{ij}|\mu_{k'j}^{[r]},\sigma_{k'j}^{[r]2})
}.
$$\\
\textbf{M step:} maximization of the complete-data log-likelihood 
$$
\pi^{[r+1]}_k=\frac{t_{\bullet k}^{[r]}}{n},
\;
\mu_{kj}^{[r+1]}=\left\{ \begin{array}{rl}
\frac{1}{t_{\bullet k}^{[r]}} \sum_{i=1}^n t_{ik}^{[r]}x_{ij} & \text{if } \omega_j^{\star}=1\\
\frac{1}{n} \sum_{i=1}^n x_{ij} &  \text{if } \omega_j^{\star}=0,
\end{array}
\right.
$$
$$
\sigma_{kj}^{[r+1]2}=\left\{ \begin{array}{rl}
\frac{1}{t_{\bullet k}^{[r]}} \sum_{i=1}^n t_{ik}^{[r]}(x_{ij}-\mu_{kj}^{[r+1]})^2&  \text{if } \omega_j^{\star}=1\\
\frac{1}{n} \sum_{i=1}^n (x_{ij}-\mu_{kj}^{[r+1]})^2  &  \text{if } \omega_j^{\star}=0,
\end{array}
\right.
$$
where $t_{\bullet k}^{[r]}=\sum_{i=1}^n t_{ik}^{[r]}$. Note that the EM algorithm can provide the maximum \emph{a posteriori} estimate by slightly modifying its M step \citep{Gre90}.

\section{Numerical experiments on simulated data} \label{sec::simulations}
\paragraph{Implementation of the proposed method} 
  Results of our method (indicated by MS) are provided by the R package \emph{VarSelLCM} available on CRAN. This package performs 50 random initializations of the algorithm described  in Section~\ref{sec::algo} to carry out the model selection.  The following hyper-parameters are chosen to be fairly  flat in the region where the likelihood is substantial and not much greater else-where:
  $\alpha_j=1$, $\beta_j=1$,
  $\lambda_j=\text{mean}(\tx_{\bullet j})$ and
  $\delta_j=0.01$.
 
\paragraph{Competing methods} 
\begin{itemize}
\item The model-based clustering method of~\cite{Raf06} is denoted RD in what follows. It runs using the R
    package \emph{clustvarsel}~\citep{Scr14}.  Results are given by
    the \emph{headlong} algorithm for the \emph{forward} direction
    (denoted RD-forw) and by the \emph{greedy} algorithm in the
    \emph{backward} direction (denoted RD-back). 

\item The sparse K-means method of
    \citet{Wit10} is not a model-based approach. It runs using
    the R package \emph{sparcl} \citep{Wit13} with its options by default. In what follows, this
    method is indicated by WT.
\end{itemize}

\paragraph{Simulation map} First, information criteria are compared on
datasets sampled from the well-specified model and on datasets
sampled from a misspecified model. Second, the three competing
methods of variable selection are compared.  The calculations are
    carried out on an $8$ Intel Xeon 3.40GHZ CPU machine.

\subsection{Comparing model selection criteria} \label{sec::simul1}

\subsubsection{Simulated data: well-specified model}
Here we compare model selection criteria when the sampling model belongs to the set of the competing models. Individuals are drawn from a bi-component Gaussian mixture model with conditional independence assumption. The first two variables are relevant to the clustering and the last two variables are not. Thus, the true model denoted by $\bm^{(0)}$ is
$$
\bm^{(0)}=(g^{(0)},\bomega^{(0)})\text{ with } g^{(0)}=2 \text{ and } \bomega^{(0)}=(1,1,0,0).$$
The following parameters are used:
$$
\pi_k=0.5,\; \mu_{11}=\mu_{12}=\varepsilon,\; \mu_{21}=\mu_{22}=-\varepsilon,\; \mu_{k3}=\mu_{k4}=0 \text{ and } \sigma_{kj}=1.
$$
The value of $\varepsilon$ defines the class overlap. Table~\ref{tab::truemodel} presents the results obtained for different sample sizes and for different class overlaps. For each case, 100 samples are generated and the criteria are computed for all the possible models in $\M$ with $g_{\max}=6$.

\begin{table}[ht!]
\begin{center}
\begin{tabular}{ccccccccc}
& & \multicolumn{5}{c}{$n$}\\
\cline{3-7}  $\varepsilon$ & criterion& $50$ & $100$ & $200$ & $400$ & $800$ \\ 
\hline 1.26& BIC & 95 (85) & 100 (99) & 100 (100) &  100 (100) & 100 (100) \\ 
& ICL & 85 (72) & 98 (95) & 100 (97) &  100 (97) & 100 (99) \\ 
 & MICL & 86 (73) & 98 (94) & 100 (97) &  100 (98) & 100 (99) \\ 
\hline 1.05& BIC & 85 (77) & 99 (94) & 100 (98) &  100 (99) & 100 (100) \\ 
& ICL & 45 (42) & 69 (62) & 98 (94) &  100 (99) & 100 (100) \\ 
 & MICL  & 50 (45) & 69 (62) & 99 (96) &  100 (99) & 100 (100) \\
\hline 0.85& BIC & 46 (32) & 73 (69) & 98 (94) &  100 (99) & 100 (100) \\ 
& ICL & 9 (7) & 15 (13) & 12 (9) &  27 (26) & 31 (31) \\ 
 & MICL  & 10 (8) & 16 (15) & 13 (11) &  31 (30) & 35 (35) \\ 
 \hline 
\end{tabular} 
\end{center}
\caption{Results of model selection for different information criteria under the true model. In plain, percentage where the true number of components ($g^{(0)}$) has been selected. In parenthesis, percentage where the true model ($\bm^{(0)}$) has been selected. \label{tab::truemodel}}
\end{table}
When the class overlap is not too high, all the criteria are consistent.
Indeed, they asymptotically always select the true model. In such a
case, the BIC criterion outperforms the other criteria when the sample size is
small. When the class overlap is equal to 0.20, the BIC criterion stays
consistent while the other ones select only a single class. However, we now show that the BIC criterion suffers from a lack of robustness.

\subsubsection{Simulated data: misspecified model}
We look at robustness
  of the criteria based on the integrated complete-data
  likelihood. Again,  the first two
  variables contain the relevant clustering information. They follow a bi-component mixture model of
   uniform distributions with conditional  independence assumption and equal proportions. More specifically, they are generated
  independently from the uniform distribution on  $[\varepsilon-1, \varepsilon+1]$ for the first component and the
  uniform distribution on $[-\varepsilon-1,-\varepsilon+1]$ for the
  second. The remaining two variables are irrelevant variables that are  
  independent of the clustering variables  and follow two independent
  standard Gaussian distributions. For each case, 100 samples
  are generated and the criteria are computed for all the possible
  models in $\M$ with $g_{\max}=6$.
Table~\ref{tab::badmodel} summarizes the selection results for each criterion.

\begin{table}[ht!]
\begin{center}
\begin{tabular}{ccccccccc}
& & \multicolumn{5}{c}{$n$}\\
\cline{3-7} $\varepsilon$ & criterion & $50$ & $100$ & $200$ & $400$ & $800$ \\ 
\hline 1.26 & BIC & 79 (75) & 80 (79) & 48 (48) &  0 (0) & 0 (0) \\ 
 & ICL & 100 (96) & 100 (98) & 100 (100)	 &  100 (99) & 97 (97) \\ 
 & MICL  & 100 (95) & 100 (98) & 100 (100) &  100 (99) & 96 (96) \\ 
\hline 1.05 & BIC & 86 (83) & 83 (80) & 49 (48) &  4 (4) & 0 (0) \\ 
 & ICL & 100 (91) & 100 (95) & 100 (98) &  99 (98) & 99 (98) \\ 
 & MICL  & 100 (92) & 100 (99) & 100 (98) &  98 (98) & 99 (98) \\ 
\hline 0.85 & BIC & 80 (78) & 72 (71) & 36 (36) &  0 (0) & 0 (0) \\ 
 & ICL & 97 (87) & 100 (96) & 100 (98) &  99 (97) & 97 (97) \\ 
 & MICL  & 97 (92) & 100 (98) & 100 (98) &  99 (97) & 97 (97) \\ 
 \hline 
\end{tabular} 
\end{center}
\caption{Results of model selection for different information criteria under the non-Gaussian model. In plain, percentage where the true  number of components ($g=2$) has been selected. In parenthesis, percentage where the true number of classes and  the true partitioning of the variable ($\bomega=(1,1,0,0)$) have been selected. \label{tab::badmodel}}
\end{table}

Results show that the BIC criterion is not useful to select the
number of components. Indeed, it overestimates the number of classes to
better fit the data since the sampling model does not belong to the
set of the competing models.  The other criteria show
considerably better performance since they select the true number of classes and the true $\bomega$. It appears that they are more robust
than the BIC criterion to the misspecification of the model at hand.

To conclude, the ICL and the MICL criteria obtain good results for 
model selection when the class overlap is not too strong. Moreover,
they are more robust to  model misspecification than the BIC
criterion. Since the MICL criterion does not require maximum
likelihood inference for all of the competing models, it is preferable to the
ICL criterion for carrying out model selection.

\subsection{Comparing methods on simulated data}
Data are drawn from a tri-component Gaussian mixture model assuming
conditional independence and equal proportions. The first $r$ variables are relevant while the last $d-r$ variables are irrelevant since they follow standard Gaussian distributions. Under component $k$, the first $r$ variables follow a spherical Gaussian distribution
$\mathcal{N}(\boldsymbol{\mu}_k;\boldsymbol{I})$ with
$\boldsymbol{\mu}_1=-\boldsymbol{\mu}_2=(\varepsilon,\ldots,\varepsilon)\in\mathbb{R}^r$
and $\boldsymbol{\mu}_3=\boldsymbol{0}_r$. 
The three competing methods are compared on five different scenarios described in Table~\ref{tab::scenarios}. 
\begin{table}[ht!]
\begin{center}
\begin{tabular}{ccccc}
 & $n$ & $r$ & $d$ & $\varepsilon$ \\ 
\hline Scenario 1 & 30 & 5 & 25 & 0.6 \\ 
Scenario 2 & 30 & 5 & 25 & 1.7 \\ 
Scenario 3 & 300 & 5 & 25 & 1.7 \\ 
Scenario 4 & 300 & 5 & 100 & 1.7 \\ 
Scenario 5 & 300 & 50 & 500 & 1.7 \\ 
\hline
\end{tabular} 
\end{center}
\caption{The five scenarios used for the method comparisons.\label{tab::scenarios}}
\end{table}

For each scenario, 25 samples of size $n$ are generated and the analysis is performed with $g=3$. Results are presented in Table~\ref{tab::methods}. Note that the RD method is run only on the smaller scenarios for computational reasons.

\begin{table}[ht!]
\begin{center}
\begin{tabular}{cccccccc}
 Scenario & Method & NRV & RRR & RIR & ARI & Time \\ 
\hline 1 & MS &  1.28 & 0.08 & 0.95 & 0.01 & 0.42 \\ 
& WT & 8.28 & 0.38 & 0.68 & 0.06 & 1.78 \\ 
   & RD-forw &  3.96 & 0.15 & 0.84 & 0.06 & 1.15 \\  
   & RD-back &  11.00 & 0.42 & 0.55 & 0.03 & 1.30 \\    
\hline 2 & MS &  5.40 & 1.00 & 0.98 & 0.66 & 0.47 \\ 
	& WT & 12.88 & 0.96 & 0.59 & 0.59 & 1.70 \\ 
   & RD-forw &  3.28 & 0.15 & 0.87 & 0.13 & 0.97 \\  
   & RD-back &  10.64 & 0.58 & 0.61 & 0.37 & 49.01 \\ 
   \hline 3  & MS &  5.00 & 1.00 & 1.00 & 0.86 & 16.21 \\ 
& WT & 25.00 & 1.00 & 0.00 & 0.87 & 7.97 \\ 
   & RD-forw &  5.52 & 0.96 & 0.96 & 0.82 & 6.42 \\  
   & RD-back &  5.64 & 1.00 & 0.97 & 0.86 & 30.65 \\ 
     \hline 4 & MS &  5.00 & 1.00 & 1.00 & 0.88 & 73.36 \\ 
  & WT & 100.00 & 1.00 & 0.00 & 0.89 & 21.22 \\ 
   & RD-forw &  6.96 & 0.92 & 0.97 & 0.81 & 38.05 \\  
    \hline & MS &  50.04 & 1.00 & 1.00 & 1.00 & 48.37 \\ 
5 & WT & 500 & 1.00 & 0.00 & 1.00 & 83.49 \\ 
   \hline
\end{tabular} 
\end{center}
\caption{Comparing variable selection methods on simulated data. Means of the numbers of relevant variables (NRV), the right relevant rates (RSR), the right irrelevant rates (RIR), the  Adjusted Rand Indices (ARI) and the computing times in second (Time).\label{tab::methods}}
\end{table}

When the sample size is small and when the component overlap is high (Scenario~1), all methods obtain poor results. However, when the component separation increases (Scenario~2), MS method leads to a better variable selection (NRV, RRR, RIR) which involves a better partitioning accuracy (ARI). Indeed, WT selects some variables which are not discriminative and which damage the partitioning accuracy. Both directions of RD lead to only average results.

When the sample size increases (Scenarios~3, 4 and 5), the MS results for the variable selections are ameliorated and the true model is almost always found. In this context, WT claims that all the variables are relevant to the clustering. Since the sample size is not too small, its partitioning accuracy is not damaged but the model interpretation is strongly harder since all the variables should be used for characterizing the classes. In this context, the RD results are good when the number of variables is small, but they are damaged when many variables are observed. Moreover, when many variables are observed (Scenarios~5), the multiple calls of EM algorithm for the model comparisons prevent the using of this method computational reasons.

During these experiments, the algorithm assessing $\bm^{\star}$ does not suffer from strong problems of local optima except for Scenario~1. Indeed, Table~\ref{tab::models1} presents statistics of the occurrence where the couple ($\tz^{\star}_{\bm^{\star}},\bm^{\star}$) has been found by the algorithm for the 50 random initializations.

\begin{table}[ht!]
\begin{center}
\begin{tabular}{cccccc}
& Scenario 1 & Scenario 2 &  Scenario 3 & Scenario 4&  Scenario 5\\ 
\hline Mean & 13 & 17 & 31  & 19 & 45   \\ 
  Min & 1 & 4 & 9  & 8 & 10   \\ 
  Max & 35 & 31 & 45 & 28 & 48 \\
  \hline
\end{tabular} 
\end{center}
\caption{Mean and minimal occurrence where the couple ($\tz^{\star}_{\bm^{\star}},\bm^{\star}$) has been found by the algorithm for the 50 random initializations for the 25 generated data. \label{tab::models1}}
\end{table}

\section{Numerical experiments on benchmark  data} \label{sec::realdata} 

  We now compare the competing methods on real datasets in which the correct
  number of groups is known. Theses data sets are presented in Table~\ref{tab::datasets}.

\begin{table}[ht!]
\begin{center}
\begin{tabular}{cccccc}
Name & $d$ & $n$ & $g^{(0)}$ & Reference & R package/website\\ 
\hline banknote & 6 & 200 & 2 & \citet{Flu88} & VarSelLCM \\
 coffee & 12 & 43 & 2 & \citet{Str73} &  ppgm \\ 
wine & 13 & 178 & 3 & \citet{For91} & UCI \\ 
cancer & 30 & 569 & 2 & \citet{Str93} & UCI \\ 
golub & 3051 & 83 & 2 & \citet{Gol99} & multtest \\ 
\hline \end{tabular} 
\end{center}
\caption{Information about the benchmark data sets. \label{tab::datasets}}
\end{table}

Our study is divided in two parts: first the three competing methods are compared by assuming the component number known, second the model-based methods are compared by assuming the component number unknown. Because RD-back is very slow, we perform the RD method only with the
forward approach. The experimental conditions are similar to those
described in the previous section.  Since the sparse K-means method
does not provide an automatic procedure to select the number of groups, this method is not
not used in the second part.

\subsection{Component number known}

Table~\ref{tab::resknown} presents the results obtained when the number of components is known. 
\begin{table}[ht!]
\begin{center}
\begin{tabular}{c|ccc|ccc|ccc|cc}
\multicolumn{1}{c}{Data} & \multicolumn{3}{c}{NRV} & \multicolumn{3}{c}{ARI} & \multicolumn{3}{c}{Time} &  \multicolumn{2}{c}{BIC}  \\ 
 & MS & WT & RD & MS & WT & RD & MS & WT & RD & MS & RD \\ 
\hline banknote & 5 & 6 & 4 & 0.96 & 0.96 & 0.98 & 2.1 & 2.3 & 1.6 & -968 & -1009 \\ 
coffee & 5 & 12 & 5 & 1.00 & 1.00 & 1.00 & 0.1 & 0.9 & 0.7 & -522 & -555 \\ 
wine & 11 & 13 & 5 & 0.87 & 0.85 & 0.73 & 5.4 & 2.4 & 1.5 & -3538 & -3769 \\ 
cancer & 15 & 30 & 17 & 0.75 & 0.70 & 0.65 & 50 & 17 & 62 & 2189 & 2569 \\ 
golub & 553 & 3051 & 10 & 0.79 & 0.11 & 0.00 & 34 & 54 & 258 & -90348 & -95255 \\ 
\hline
\end{tabular} 
\end{center}
\caption{Results obtained when the component number is known: number of relevant variables (NRV), adjusted rand index computed on the selected model (ARI), computing time in seconds (Time) and BIC criterion value.\label{tab::resknown}}
\end{table}

The first three data sets are the less challenging since the number of variables is moderate. However, MS provides an easier interpretable model than WT (less relevant variables) which provides an identical partitioning accuracy. Surprisingly, MS leads to a model having a better BIC value than RD while this latter method aims to maximizing this criterion. This phenomenon illustrates the sub-optimality problem of the RD procedure. Moreover, note that, by selecting only five variables on \emph{wine} data set, RD provides a less relevant partition.

For the larger data sets, RD obtains better results. Indeed, it is more easily interpretable and it provides a more accurate partition. For instance, on \emph{cancer} data set, MS obtains a better ARI than both other methods while it selects less variables. Concerning \emph{golub} data set, WT relates all the variables while RD selects only ten variables. These methods also obtain a worse partition than MS. Finally, note that MS obtains a better value of the BIC criterion than RD on \emph{golub} data set.

Table~\ref{tab::local} shows that the couple ($\tz^{\star}_{\bm^{\star}}, \bm^{\star}$) is easily accessible by the proposed algorithm. Thus, problems due to multiple local optima do not occur on these data sets. Moreover, it indicates the values of the ICL and MICL criteria related to the model selected by MS. Even if the MICL value is always larger (or equal) than the ICL value, this difference is often small. Thus, the partitions $\hat{\tz}_{\bm^{\star}}$ and $\tz^{\star}_{\bm^{\star}}$ are often quite similar.

\begin{table}[ht!]
\begin{center}
\begin{tabular}{ccccccc}
& banknote & coffee &  wine & cancer&   golub\\ 
\hline Occurrence & 48 & 27 & 11  & 48 &  6 \\ 
MICL & -1009.2 & -644.1 & -3715.7 & -7963.5 & -103858.8 \\ 
ICL & -1009.2 & -644.1 & -3715.8 &-8064.1 & -103858.8\\
  \hline
\end{tabular} 
\end{center}
\caption{Ocurrence where the couple ($\tz^{\star}_{\bm^{\star}},\bm^{\star}$) has been found by the algorithm for the 50 random initializations and values of the information criteria for model $\bm^{\star}$. \label{tab::local}}
\end{table}

\subsection{Component number unknown}
Table~\ref{tab::resunknown} presents the results obtained when the number of components is unknown by setting $g_{\max}=6$. 

\begin{table}[ht!]
\begin{center}
\begin{tabular}{c|cc|cc|cc|cc}
\multicolumn{1}{c}{Data} & \multicolumn{2}{c}{$\hat{g}$} & \multicolumn{2}{c}{NRV} & \multicolumn{2}{c}{ARI} & \multicolumn{2}{c}{BIC}  \\ 
& MS & RD& MS & RD& MS & RD& MS & RD \\ 
\hline banknote & 3 & 5 & 6 & 4 & 0.61 & 0.40 & -926 & -978 \\ 
coffee & 2 & 3 & 5 & 6 & 1.00 & 0.38 & -522 & -521 \\ 
wine & 4 & 4 & 11 & 6 & 0.67 & 0.72 & -3502 & -3739 \\ 
cancer & 6 & 6 & 13 & 15 & 0.21 & 0.23 & 4192 & 4861 \\ 
golub & 2 & 6 & 553 & 8 & 0.79 & 0.00 & -90348 & -95098 \\ 
\hline
\end{tabular} 
\end{center}
\caption{Results obtained when the component number is unknown: number of components ($\hat{g}$), number of relevant variables (NRV), adjusted rand index computed on the selected model (ARI) and BIC criterion value.\label{tab::resunknown}}
\end{table}

On these data sets, RD selects more components than MS. Thus, its interpretation becomes more complex even if RD can select less variables. The previous remarks are validated by this application. Indeed, we observe than MS can provide a model with a better BIC value. Moreover, when the number of variables increases, the results of RD are damaged due to the lack of optimality. This is shown particularly by \emph{golub} data set.

\section{Discussion} \label{sec::conclusion}
We have proposed a new information criterion to carry out model selection of a finite mixture model.  This criterion can be used for selecting the relevant variables for model-based clustering
in Gaussian mixture settings assuming conditional independence. In such a case, the criterion has a closed form and the model maximizing it is accessible by an algorithm of alternated optimization. Its originality consists in allowing a model selection procedure which does not require the maximum likelihood estimate. 

The criterion can be easily used when the model at hand is a mixture of distributions belonging to an exponential family. Indeed, in such cases, the closed form is preserved. Thus, the MICL criterion can carry out variable selection in a cluster analysis of categorical or mixed datasets by using the model of \citet{Cel91} and of \citet{Mou05} respectively. The application of the proposed method to the categorical data appears to be especially pertinent since conditional independence assumption is often assumed for such data, and since Jeffreys non informative prior distributions are available.

If the conditional independence assumption is relaxed, the algorithm used for model selection should be modified. Then, its model step is not explicit but it can be achieved by a MCMC method.

We have compared our method with two standard procedures of variable selection in cluster analysis. It was shown that the proposed method outperforms both other ones for the task of variable selection. It results in a better partitioning accuracy. In a moderate computing time, the proposed method can manage datasets with a large number of variables and a relatively large number of individuals. However, the procedure of model selection is time-consuming if a huge number of individuals is observed. In such a case, the optimization of the model selection procedure is an issue which calls for further improvements. 

Finally, this approach could be extend to perform a more elaborated variable selection. Indeed, by using the approach of \citet{Mau09}.

The R package \emph{VarSelLCM} implementing the proposed method is downloadable on CRAN.

\section*{Acknowledgement}
The authors are grateful to Gilles Celeux and Jean-Michel Marin for their leading comments.

\bibliographystyle{apalike}
\bibliography{biblio.bib}

\begin{appendices}
  \section{Consistency of the MICL criterion} \label{sec::consistence}

  This section is devoted to the proof of consistency of our $\MICL$ criterion
  with a fixed number of components. The first part deals with non-nested
  models and requires a \textit{biais-entropy} compensation assumption. The
  second part covers the nested models, \textit{i.e}, when the competing model
  contains the true model. In what follows, we consider the  true model $\bm^{\pzero} =
  \big(g^{\pzero}, \boldsymbol{\omega}^{\pzero}\big)$, its set of relevant
  variables is $\Omega^{\pzero} = \left\{j : \omega^{\pzero}_j = 1 \right\}$
  and the parameter is $\bth^{\pzero}$.

\paragraph{Case of non-nested model}
 We need to introduce the entropy notation given by
\begin{equation*}
  \xi\big(\bth; \tz, \bm\big) = \sum_{i = 1}^n \sum_{k = 1}^g z_{ik} \ln \tau_{ik}\big(\bth \mid \bm\big),
\end{equation*}
where $\tau_{ik}\big(\bth \mid \bm\big) 
       = \dfrac{\pi_k \phi\big(\boldsymbol{x}_i \mid \theta_k,
         \bm\big)}{\sum_h^g\pi_h 
         \phi\big(\boldsymbol{x}_i \mid \theta_h, \bm\big)}.$ 
  \begin{proposition}
    Assume that $\bm^{\pone}$ is a model such that $\bm^{\pzero}$ is a non-nested
    within $\bm^{\pone}$. Assume that  
    \begin{equation}
    \label{eq:H}
        - \mathbb{E}\left[\ln \dfrac{\sum_{k = 1}^{g^{\pzero}}\pi_k
            \prod_{j = 1}^d\phi\big(x_{1j}
          \mid \mu^{\pzero}_{kj}, \sigma^{\pzero 2}_{kj}\big)\mathbbm{1}_{G^{\pzero}_k}\big(\boldsymbol{x}_1\big)}
         {p\big( \boldsymbol{x}_1 \mid
        \bth^{\pzero},\bm^{\pzero}\big)}\right]
      \le \mathbf{KL}\Big[\bm^{\pzero}||\bm^{\pone}\Big],
   \end{equation}
   where $\mathbf{KL}\Big[\bm^{\pzero}||\bm^{\pone}\Big]$ is the
   Kullback-Leibler divergence of $p\big(\cdot \mid
   \bth^{\pzero},\bm^{\pzero}\big)$ from $p\big(\cdot \mid
   \bth^{\pone},\bm^{\pone}\big)$ and 
   \begin{equation*}
     G^{\pzero}_k = \left\{x \in \mathbb R^d
       : k = \underset{1 \le h \le g^{\pzero}}{\argmax}\, \pi_h \prod_{j = 1}^d\phi\big(x_{1j}
          \mid \mu^{\pzero}_{hj}, \sigma^{2 \pzero}_{hj}\big)\right\}.
 \end{equation*}
 When $n \to \infty$, we have
    \begin{equation*}
      \bbP\bigg(\MICL\big(\bm^{\pone}\big) > \MICL\big(\bm^{\pzero}\big)\bigg)
      \longrightarrow 0.
    \end{equation*}
    \end{proposition}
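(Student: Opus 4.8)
The plan is to sandwich $\tfrac1n\MICL(\bm^{\pone})$ and $\tfrac1n\MICL(\bm^{\pzero})$ between two deterministic limits whose gap is exactly the slack in~\eqref{eq:H}. Write $\ell(\bth,\bm)=\mathbb{E}\big[\ln p(\bx_1\mid\bth,\bm)\big]$ for the expectation taken under the true law $(\bth^{\pzero},\bm^{\pzero})$, put $\ell^{\pzero}=\ell(\bth^{\pzero},\bm^{\pzero})$, let $\bth^{\pone}$ be a maximiser of $\ell(\cdot,\bm^{\pone})$ (the pseudo-true parameter of the misspecified model) and $\ell^{\pone}=\ell(\bth^{\pone},\bm^{\pone})$, so that $\mathbf{KL}[\bm^{\pzero}||\bm^{\pone}]=\ell^{\pzero}-\ell^{\pone}$. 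Since the number of components is held fixed, $\bm^{\pzero}$ being non-nested within $\bm^{\pone}$ forces $\bm^{\pone}$ to declare irrelevant some variable whose true one-dimensional marginal is a genuine $g^{\pzero}$-component Gaussian mixture; hence the sampling law lies outside the $\bm^{\pone}$ family and $\mathbf{KL}[\bm^{\pzero}||\bm^{\pone}]>0$.

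For the upper bound I would first observe that marginalising the labels in the integrated likelihood gives $p(\tx\mid\bm^{\pone})=\sum_{\tz}p(\tx,\tz\mid\bm^{\pone})$, so $p(\tx,\tz\mid\bm^{\pone})\le p(\tx\mid\bm^{\pone})$ for every partition $\tz$ and therefore
\begin{equation*}
\MICL(\bm^{\pone})=\ln p(\tx,\tzstar_{\bm^{\pone}}\mid\bm^{\pone})\ \le\ \ln p(\tx\mid\bm^{\pone}).
\end{equation*}
It then remains to invoke the large-sample behaviour of the marginal likelihood of the (possibly misspecified) i.i.d.\ model $\bm^{\pone}$ equipped with the conjugate priors, namely $\tfrac1n\ln p(\tx\mid\bm^{\pone})\to\ell^{\pone}$ in probability, to conclude that $\tfrac1n\MICL(\bm^{\pone})\le\ell^{\pone}+o_{\mathbb{P}}(1)$.

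For the lower bound I would test $\MICL(\bm^{\pzero})$ against the oracle partition $\tz^{\circ}$ given by $z^{\circ}_{ik}=\mathbbm{1}_{G^{\pzero}_k}(\bx_i)$, i.e. the MAP allocation under $\bth^{\pzero}$. Since $\tz^{\circ}$ is a deterministic function of the data, the pairs $(\bx_i,\bz^{\circ}_i)$ are i.i.d.\ and $p(\tx,\tz^{\circ}\mid\bm^{\pzero})$ is the marginal likelihood of the associated complete-data model; restricting its defining integral to a shrinking neighbourhood of $\bth^{\pzero}$ and applying the law of large numbers gives
\begin{equation*}
\liminf_n\tfrac1n\ln p(\tx,\tz^{\circ}\mid\bm^{\pzero})\ \ge\ \mathbb{E}\big[\ln p(\bx_1,\bz^{\circ}_1\mid\bth^{\pzero},\bm^{\pzero})\big].
\end{equation*}
Now $\ln p(\bx_1,\bz^{\circ}_1\mid\bth^{\pzero},\bm^{\pzero})=\ln p(\bx_1\mid\bth^{\pzero},\bm^{\pzero})+\sum_k z^{\circ}_{1k}\ln\tau_{1k}(\bth^{\pzero}\mid\bm^{\pzero})$, and as $\tz^{\circ}$ picks out the largest $\tau_{1k}$ the right-hand side equals $\ell^{\pzero}-H$ with $H:=-\mathbb{E}\big[\max_k\ln\tau_{1k}(\bth^{\pzero}\mid\bm^{\pzero})\big]\ge0$, whence $\tfrac1n\MICL(\bm^{\pzero})\ge\ell^{\pzero}-H+o_{\mathbb{P}}(1)$. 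The remaining bookkeeping is to note that, because $\{G^{\pzero}_k\}$ partitions $\mathbb{R}^d$ up to a null set, the ratio inside~\eqref{eq:H} is exactly $\max_k\tau_{1k}(\bth^{\pzero}\mid\bm^{\pzero})$, so the left-hand side of~\eqref{eq:H} is precisely $H$. Combining the two bounds with~\eqref{eq:H} yields, in probability,
\begin{equation*}
\limsup_n\tfrac1n\MICL(\bm^{\pone})\le\ell^{\pone}=\ell^{\pzero}-\mathbf{KL}[\bm^{\pzero}||\bm^{\pone}]\le\ell^{\pzero}-H\le\liminf_n\tfrac1n\MICL(\bm^{\pzero}),
\end{equation*}
and as soon as this chain is strict one picks $\delta>0$ with $\ell^{\pone}+\delta<\ell^{\pzero}-H-\delta$ and bounds $\bbP\big(\MICL(\bm^{\pone})>\MICL(\bm^{\pzero})\big)$ by $\bbP\big(\tfrac1n\MICL(\bm^{\pone})>\ell^{\pone}+\delta\big)+\bbP\big(\tfrac1n\MICL(\bm^{\pzero})<\ell^{\pzero}-H-\delta\big)$, which tends to $0$.

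I expect the hard part to be the upper-bound step: establishing $\tfrac1n\ln p(\tx\mid\bm^{\pone})\to\ell^{\pone}$ for a \emph{misspecified} Gaussian mixture is not a direct Laplace argument, since the mixture likelihood is unbounded — it is the conjugate Inverse-Gamma prior on the variances that keeps the marginal likelihood finite and fixes its rate — and it also needs the existence and local identifiability of the pseudo-true parameter $\bth^{\pone}$; the parallel claim for the complete-data model in the lower bound is comparatively easy because $\tz^{\circ}$ is frozen. The remaining delicate point is that the argument above is fully conclusive only when~\eqref{eq:H} is strict, since otherwise the two deterministic limits coincide and the sign of $\MICL(\bm^{\pone})-\MICL(\bm^{\pzero})$ is governed by $\sqrt n$-order fluctuations; this is precisely the borderline nature of a bias-entropy compensation hypothesis, and presumably why the statement is phrased as an inequality rather than as a strict separation.
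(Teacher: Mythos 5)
Your proposal is correct and follows the same overall strategy as the paper's proof: sandwich the two criteria via $\MICL(\bm^{\pone})\le\ln p(\tx\mid\bm^{\pone})$ from above and an integrated complete-data likelihood at one particular partition from below, identify the order-$n$ leading terms, and let the bias--entropy assumption \eqref{eq:H} close the gap. The one genuine difference is the lower bound: the paper uses $\ICL(\bm^{\pzero})\le\MICL(\bm^{\pzero})$, i.e.\ the partition $\widehat{\tz}^{\pzero}$ from the MAP rule at the maximum likelihood estimate, and then needs a Laplace expansion of the ICL together with the convergence of the empirical entropy $\xi\big(\widehat{\bth}^{\pzero};\widehat{\tz}^{\pzero},\bm^{\pzero}\big)/n$, which implicitly rests on consistency of the MLE; you instead plug in the oracle partition $z^{\circ}_{ik}=\mathbbm{1}_{G^{\pzero}_k}(\bx_i)$, which makes the pairs $(\bx_i,\bz^{\circ}_i)$ i.i.d.\ and reduces the lower bound to a direct law of large numbers. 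The two choices are asymptotically equivalent, since both partitions produce the same first-order entropy term $-H=\mathbb{E}\big[\max_k\ln\tau_{1k}(\bth^{\pzero}\mid\bm^{\pzero})\big]$, and your identification of the ratio in \eqref{eq:H} with $\max_k\tau_{1k}$ is exactly right; your version is arguably cleaner at this step, at the price of not tracking the $\ln n$ and $\mathcal{O}_p(1)$ corrections (the paper's $A_n$ and $\Delta\nu\ln n$ terms), which are dominated by the $O(n)$ gap anyway. The two caveats you flag are both shared by the paper: the Laplace-type claim $\tfrac1n\ln p(\tx\mid\bm^{\pone})\to\ell^{\pone}$ under misspecification is asserted there without further justification, and the boundary case where \eqref{eq:H} holds with equality is not actually covered by the paper's argument either, since $\bbP[B_n>0]\to0$ fails when the limit of $B_n$ is exactly zero and $nB_n$ fluctuates at order $\sqrt n$; your explicit acknowledgement that strict inequality is what the argument really uses is a point in your favour rather than a gap.
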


    \begin{proof}
      For any model $\bm$, we have the following inequalities, 
      \begin{equation*}
      \ICL\big(\bm\big) \le \MICL\big(\bm\big) \le \ln p\big(\tx \mid \bm\big).
      \end{equation*}
       It follows,
     \begin{equation*}
      \bbP\bigg\{\MICL\big(\bm^{\pone}\big) - \MICL\big(\bm^{\pzero}\big)
        > 0\bigg\}
       \le \bbP\bigg\{\ln p\big(\tx \mid \bm^{\pone}\big) -
       \ICL\big(\bm^{\pzero}\big) > 0\bigg\}.
    \end{equation*}
     Now set $\Delta\nu = \nu^{\pone} - \nu^{\pzero}$ where $\nu^{\pone}$ and
      $\nu^{\pzero}$ are the numbers of free parameters in the models
      $\bm^{\pone}$ and $\bm^{\pzero}$ respectively. 
     Using Laplace's approximation, we have
     \begin{equation*}
       \ICL\big(\bm^{\pzero}\big) 
         = \ln p\left(\tx \mid \widehat{\bth}^{\pzero}, \bm^{\pzero}\right)
       + \xi\left(\widehat{\bth}^{\pzero}; \widehat{\tz}^{\pzero},
         \bm^{\pzero}\right) - \dfrac{\nu^{\pzero}}{2} \ln n+ \mathcal{O}_p(1),
     \end{equation*}
     where $\widehat{\bth}^{\pzero}$ and $\widehat{\tz}^{\pzero}$ are
     respectively the MLE and the partition given by the
     corresponding MAP rule.
     In the same way, we have 
     \begin{equation*}
       \ln p\big(\tx \mid \bm^{\pone}\big) = \ln p\big(\tx \mid
       \widehat{\bth}^{\pone},
       \bm^{\pone}\big) - \dfrac{\nu^{\pone}}{2} \ln n + \mathcal{O}_p(1),
     \end{equation*}
    where $\widehat{\bth}^{\pone}$ is the MLE of $\bth^{\pone}$.
    Note that 
    \begin{align*}
     \ln p\big(\tx \mid \bm^{\pone}\big) -
       \ICL\big(\bm^{\pzero}\big)  & = \dfrac{A_n}{2} + n B_n  - \dfrac{\Delta
       \nu}{2} \ln n + \mathcal{O}_p(1),    
    \end{align*}
   where
   \begin{equation*}
     A_n  = 2 \ln\dfrac{p\left(\tx \mid \widehat{\bth}^{\pone},\bm^{\pone}\right)}{ 
       p\left(\tx \mid \bth^{\pone}, \bm^{\pone}\right)} - 
      2 \ln \dfrac{p\left(\tx \mid \widehat{\bth}^{\pzero},
          \bm^{\pzero}\right)}{p\left(\tx \mid \bth^{\pzero},
          \bm^{\pzero}\right)},
   \end{equation*}
and 
\begin{equation*}
  B_n = \dfrac{1}{n} \ln\dfrac{p\left(\tx \mid \bth^{\pone},\bm^{\pone}\right)}
   {p\left(\tx \mid \bth^{\pzero},\bm^{\pzero}\right)} -
  \dfrac{1}{n}\xi\left(\widehat{\bth}^{\pzero};
    \widehat{\tz}^{\pzero},\bm^{\pzero}\right).
\end{equation*}
When $n \to \infty$, we have $A_n \to \chi^2_{\Delta \nu}$ in
distribution and $B_n$ tends to 
\begin{equation*}
-\mathbf{KL}\Big[\bm^{\pzero}||\bm^{\pone}\Big] - \mathbb{E}\left[\ln \dfrac{\sum_{k = 1}^{g^{\pzero}}\pi_k
            \prod_{j = 1}^d\phi\big(x_{1j}
          \mid \mu^{\pzero}_{kj}, \sigma^{\pzero 2}_{kj}\big)\mathbbm{1}_{G^{\pzero}_k}\big(\boldsymbol{x}_1\big)}
         {p\big( \boldsymbol{x}_1 \mid
        \bth^{\pzero},\bm^{\pzero}\big)}\right]
\end{equation*}
in probability. Thus, under the assumption~\eqref{eq:H}, $\MICL$ is consistent
since when $n \to \infty$, we have
\begin{align*}
  \bbP\bigg\{\MICL\big(\bm^{\pone}\big) - \MICL\big(\bm^{\pzero}\big)
        >  0 \bigg\} & \le  \bbP\bigg[A_n    + 
  \mathcal{O}_p(1) >  \Delta \nu \ln n\bigg] + \bbP\bigg[B_n > 0 \bigg]\\
  & \longrightarrow 0. 
 \end{align*}
\end{proof}

\paragraph{Case of nested model}
 Recall that
 $\MICL\big(\bm^{\pzero}\big) = \ln p\big(\tx, \tz^{\pzero}\mid \bm^{\pzero}\big)$,
 where  $\tz^{\pzero} = \underset{\tz}{\argmax} 
 \ln p\big(\tx, \tz \mid \bm^{\pzero}\big)$.
We have 
\begin{equation*}
\tz^{\pzero} = \underset{\tz}{\argmax}\Big\{\ln p\left(\tz \mid
   g^{\pzero}\right)  +  \underset{j \in \Omega_0}{\sum}\ln
 p\big(\tx_{\bullet j} \mid
   \omega_j^{\pzero}, g^{\pzero}, \tz\big) \Big\},
 \end{equation*}
where $\Omega_0 = \left\{j : \omega^{\pzero}_j = 1\right\}$. Let $\bm^{\pone}
= \left(g^{\pzero}, \Omega_1\right)$ where $\Omega_1 = \Omega_0 \cup
\Omega_{01}$ and $ \Omega_{01} = \left\{j : \omega_j^{\pone} = 1,
  \omega_j^{\pzero} = 0 \right\}$. Then, in the same way, we have
$\MICL\big(\bm^{\pone}\big) = \ln p\big(\tx, \tz^{\pone}\mid
  \bm^{\pone}\big)$,
where
\begin{equation*}
 \tz^{\pone} = \underset{\tz}{\argmax}\left[\ln p\left(\tz \mid
      g^{\pzero}\right)  +  \underset{j \in \Omega_1}{\sum}\ln
    p\left(\tx_{\bullet j} \mid
      \omega_j^{\pone}, g^{\pzero}, \tz\right) \right].
\end{equation*}
Let $j \in \Omega_{01}$, Laplace's approximation gives us,  
\begin{equation*}
 \ln  p\left(\tx_{\bullet j} \mid \omega_j^{\pone}, g^{\pzero},
   \tz\right)
 = \sum_{i =1}^n \sum_{k = 1}^g z_{ik}\ln 
\phi\left(x_{ij} \mid \tilde{\mu}^{\pone}_{kj}, \tilde{\sigma}^{\pone
        2}_{kj}\right) - g^{\pzero} \ln n + \mathcal{O}_p\big(1\big), 
\end{equation*}
where
\begin{equation*}
  \left(\tilde{\mu}^{\pone}_{kj}, \tilde{\sigma}^{\pone
        2}_{kj}\right) =  \underset{\mu^{\pone}_{kj}, \sigma^{\pone
        2}_{kj}}{\argmax} \sum_{i =1}^n z_{ik}\ln 
   \phi\left(x_{ij} \mid \mu^{\pone}_{kj}, \sigma^{\pone
        2}_{kj}\right).
\end{equation*}
\begin{proposition}
Assume that  $\bm^{\pone}$ is a model such that $g^{\pone} = g^{\pzero}$ and $\Omega_1 =
  \Omega_0 \cup \Omega_{01}$ where $\Omega_{01} \neq \emptyset$, i.e, 
the model $\bm^{\pzero}$ is nested within the model $\bm^{\pone}$ with
the same number of components. When $n \to \infty$,
\begin{equation*}
  \bbP\bigg(\MICL\big(\bm^{\pone}\big) > \MICL\big(\bm^{\pzero}\big)\bigg)
 \longrightarrow 0.
\end{equation*}
\end{proposition}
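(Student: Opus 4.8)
The plan is to bound $\MICL(\bm^{\pone})-\MICL(\bm^{\pzero})$ above by a sum of $|\Omega_{01}|$ per‑variable terms, each of which behaves like $\mathcal{O}_p(1)-(g^{\pzero}-1)\ln n$, and thus to conclude that the whole difference tends to $-\infty$ in probability, which gives the claim.

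First I would expand both criteria with the closed form of the integrated complete‑data likelihood: $\MICL(\bm^{\pzero})=\ln p(\tz^{\pzero}\mid g^{\pzero})+\sum_{j\in\Omega_0}\ln p(\tx_{\bullet j}\mid 1,g^{\pzero},\tz^{\pzero})+\sum_{j\notin\Omega_0}\ln p(\tx_{\bullet j}\mid 0,g^{\pzero})$, and likewise for $\bm^{\pone}$ with $\Omega_1$ and $\tz^{\pone}$. Since $g^{\pone}=g^{\pzero}$, both $\tz^{\pzero}$ and $\tz^{\pone}$ live on the same space $\{1,\dots,g^{\pzero}\}^n$, so $\tz^{\pone}$ is an admissible argument of the functional $\tz\mapsto\ln p(\tz\mid g^{\pzero})+\sum_{j\in\Omega_0}\ln p(\tx_{\bullet j}\mid 1,g^{\pzero},\tz)$ that $\tz^{\pzero}$ maximises. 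Subtracting, the $\ln p(\tx_{\bullet j}\mid 0,g^{\pzero})$ terms with $j\notin\Omega_1$ cancel, those with $j\in\Omega_{01}$ survive in $\MICL(\bm^{\pzero})$ only, and the $\Omega_0$‑part is a difference of that functional at $\tz^{\pone}$ and at its maximiser $\tz^{\pzero}$, hence nonpositive. This yields
\begin{equation*}
\MICL(\bm^{\pone})-\MICL(\bm^{\pzero})\;\le\;\sum_{j\in\Omega_{01}}\Big(\ln p(\tx_{\bullet j}\mid 1,g^{\pzero},\tz^{\pone})-\ln p(\tx_{\bullet j}\mid 0,g^{\pzero})\Big).
\end{equation*}

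Next I would insert the Laplace expansions already recorded above: for $j\in\Omega_{01}$, $\ln p(\tx_{\bullet j}\mid 1,g^{\pzero},\tz^{\pone})=\sum_{i=1}^n\sum_{k=1}^{g^{\pzero}}z^{\pone}_{ik}\ln\phi(x_{ij}\mid\tilde{\mu}^{\pone}_{kj},\tilde{\sigma}^{\pone 2}_{kj})-g^{\pzero}\ln n+\mathcal{O}_p(1)$, together with the analogous one‑Gaussian expansion $\ln p(\tx_{\bullet j}\mid 0,g^{\pzero})=\sum_{i=1}^n\ln\phi(x_{ij}\mid\hat{\mu}_j,\hat{\sigma}^2_j)-\ln n+\mathcal{O}_p(1)$. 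The $j$‑th summand then reduces to $\big(\ell_j(\tz^{\pone})-\ell_j^0\big)-(g^{\pzero}-1)\ln n+\mathcal{O}_p(1)$, where $\ell_j(\tz)$ is the maximised complete‑data Gaussian log‑likelihood of $\tx_{\bullet j}$ under the partition $\tz$ and $\ell_j^0$ the maximised one‑Gaussian log‑likelihood. Because $j$ is irrelevant under $\bm^{\pzero}$, $\tx_{\bullet j}$ is i.i.d.\ from a single Gaussian whose mean and variance are common to all components, so the within‑class estimates converge to that common pair and a standard second‑order (Wilks‑type) argument gives $\ell_j(\tz^{\pone})-\ell_j^0=\mathcal{O}_p(1)$ — heuristically of the order of a $\chi^2_{2(g^{\pzero}-1)}$ variate. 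Summing over the finite set $\Omega_{01}$ and using $g^{\pzero}\ge 2$ (the nondegenerate case, since for $g^{\pzero}=1$ relevant and irrelevant coincide and $\Omega_{01}\neq\emptyset$ is impossible), we get $\MICL(\bm^{\pone})-\MICL(\bm^{\pzero})\le\mathcal{O}_p(1)-|\Omega_{01}|(g^{\pzero}-1)\ln n\to-\infty$ in probability, hence $\bbP\big(\MICL(\bm^{\pone})>\MICL(\bm^{\pzero})\big)\to 0$.

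The step that needs genuine care is the estimate $\ell_j(\tz^{\pone})-\ell_j^0=\mathcal{O}_p(1)$: unlike in a fixed‑partition likelihood‑ratio computation, $\tz^{\pone}$ is data‑dependent and could a priori be pulled toward an irrelevant coordinate. The resolution is that, because every component has the same marginal law on an irrelevant variable, the coordinates in $\Omega_{01}$ do not alter the limit of $n^{-1}\ln p(\tx,\tz\mid\bm^{\pone})$ as a function of $\tz$; hence $\tz^{\pone}$ is driven to the same population‑optimal partition as $\tz^{\pzero}$, and one must then argue that this closeness is strong enough that $\ell_j(\tz^{\pone})$ stays within $o(\ln n)$ of its value at a fixed partition. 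Controlling the label discrepancy between $\tz^{\pone}$ and a fixed partition at that finer scale (rather than merely $o_p(n)$) is the delicate point; everything else is routine bookkeeping with the closed forms and Laplace's method.
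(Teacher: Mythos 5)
Your proof follows essentially the same route as the paper's: reduce the event $\{\MICL(\bm^{(1)})>\MICL(\bm^{(0)})\}$ to the sum over $j\in\Omega_{01}$ of per-variable log-ratios (your justification of that inequality via the optimality of $\tz^{(0)}$ for the $\Omega_0$-functional is a step the paper leaves implicit), apply the Laplace expansions to get a Wilks-type statistic minus $(g^{(0)}-1)\ln n$ per variable, and conclude from the $\chi^2_{2(g^{(0)}-1)}$ limit. The ``delicate point'' you flag --- that $\tz^{(1)}$ is data-dependent, so the $\mathcal{O}_p(1)$ chi-square control of $\ell_j(\tz^{(1)})-\ell_j^0$ is not a standard fixed-partition likelihood-ratio result --- is precisely the step the paper also asserts without proof, so your attempt is no less complete than the published argument.
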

\begin{proof}
We have 
 \begin{equation*}
   \bbP\bigg\{\MICL\big(\bm^{\pone}\big) > \MICL\big(\bm^{\pzero}\big)\bigg\}
   \le  
   \bbP\left\{\sum_{j \in \Omega_{01}} \ln \dfrac{p\big(\tx_{\bullet j} \mid
   \omega^{\pone}_j, g^{\pzero}, \tz^{(1)}\big)}{
    p\big(\tx_{\bullet j} \mid
   \omega^{\pzero}_j, g^{\pzero}, \tz^{(0)}\big)}
   > 0
   \right\},
   \end{equation*}
And for each $j \in \Omega_{01}$,
when $n \to \infty$
\begin{equation*}
2 \sum_{i =1}^n \sum_{k =1}^{g^{\pzero}} z^{\pone}_{ik}\ln 
\dfrac{\phi\left(x_{ij} \mid \tilde{\mu}^{\pone}_{kj}, \tilde{\sigma}^{\pone
        2}_{kj}\right)}{\phi\left(x_{ij} \mid \mu^{\pzero}_{1j},\sigma^{\pzero2}_{1j}\right)} 
\longrightarrow \chi^2_{2g^{\pzero}}  \quad \text{in distribution}.
\end{equation*}
We have
 \begin{align*}
 \bbP\bigg(\sum_{j \in \Omega_{01}} \ln \dfrac{p\big(\tx_{\bullet j} \mid
   \omega^{\pone}_j, g^{\pzero}, \tz^{(1)}\big)}{
    p\big(\tx_{\bullet j} \mid
   \omega^{\pzero}_j, g^{\pzero}, \tz^{(0)}\big)}
   > 0
   \bigg)  
   & = 
   \bbP\Big(\chi^2_{2(g^{\pzero}-1)} - 2 (g^{\pzero}-1) \ln
   n  > 0\Big)\\
   & \longrightarrow 0 \quad \text{ by Chebyshev's inequality}. 
  \end{align*}
\end{proof}
 \end{appendices}

\end{document}